\documentclass[preprint,12pt]{elsarticle}

\usepackage{graphicx}
\usepackage{threeparttable}
\usepackage{multirow}
\usepackage{booktabs}
\usepackage{longtable}
\usepackage{pdflscape}
\usepackage{caption}
\usepackage{amsmath,amssymb,amsthm}
\usepackage{lineno}

\newtheorem{proposition}{Proposition}
\newtheorem{assumption}{Assumption}

\journal{Computational Statistics \& Data Analysis}

\begin{document}

\begin{frontmatter}

\title{Recovering the Target Hazard Ratio Under Nonproportional Hazards Induced by an Omitted Covariate: Simulation-based Approach}

\author[nci]{Jong-Hyeon Jeong\corref{cor1}}
\ead{jong-hyeon.jeong@nih.gov}

\cortext[cor1]{Corresponding author.}

\address[nci]{Biometric Research Program, DCTD, National Cancer Institute, Rockville, MD, USA}

\begin{abstract}
When an omitted covariate whose inclusion would balance proportional hazards is excluded from a proportional hazards model, bias in the estimated treatment effect may arise. The omitted covariate may represent an unobservable biomarker status, an overlooked stratification factor, or a strong continuous prognostic factor. In this paper, we propose a simple simulation-based approach for recovering the target hazard ratio for treatment effect, defined as the hazard ratio from the correctly specified proportional hazards model that includes the omitted covariate. Our approach identifies the target hazard ratio value that generates a band of survival curves enclosing the observed survival curve estimates most, under the minimal assumptions of a Weibull baseline event-time distribution, (a mixture of) uniform censoring, and unit variance of the omitted covariate. Simulation studies indicate that the proposed method recovers the target hazard ratio reasonably well under practical scenarios involving a broad range of true hazard ratios, regardless of the true distribution of the omitted covariate. Consistency of the empirical estimates from the proposed procedure to the true values is proved. We illustrate the proposed method using data from a phase III breast cancer clinical trial.
\end{abstract}

\begin{keyword}
Frailty Models \sep Misspecified Proportional Hazards Model \sep Omitted Covariate \sep True Hazard Ratio \sep Vanishing Hazard Ratio \sep Tightest Survival Band (TSB)
\end{keyword}

\end{frontmatter}


\section{Introduction}

The proportional hazards assumption is crucial when the Cox model \cite{cox1972,cox1976} is used in data analysis. It is well known that, under model misspecification, the score test statistic based on the partial likelihood may lose efficiency \cite{struthers1986,oakesjeong1998}, and the partial likelihood estimator may be substantially biased \cite{struthers1986,henderson1999,harrington1982,gail1984,huber1988,byar1988,ford1995,aalen2015,lancaster1990,schmoor1997}. In practice, however, the proportional hazards assumption may be violated for a variety of reasons, including an unpredictable onset of treatment effect, patient noncompliance with the protocol therapy, or omission of a covariate whose inclusion would restore proportional hazards.

As a real example of nonproportional hazards, consider a clinical trial conducted by the National Surgical Adjuvant Breast and Bowel Project (NSABP), a National Cancer Institute (NCI) cooperative group. In this study, patients with primary breast cancer, negative axillary nodes, and estrogen receptor--positive tumors were randomized to receive either tamoxifen (hormonal therapy) or placebo following surgery \cite{fisher1989}. In this trial, the proportional hazards assumption for disease-free survival (DFS) between the treatment groups was violated, with a significant $p$-value of 0.00018. Figure \ref{fig:b14_plots} presents the Kaplan--Meier estimates of the DFS survival distribution (Figure \ref{fig:b14_plots}(a)), the censoring distribution (Figure \ref{fig:b14_plots}(b)), and the smoothed scaled Schoenfeld residuals \cite{schoenfeld1982} (Figure \ref{fig:b14_plots}(c)). The residual plot suggests that the estimated treatment effect, quantified by the hazard ratio, diminishes over time.

\begin{figure}[htbp]
  \makebox[\textwidth][c]{\includegraphics[width=1.6\textwidth]{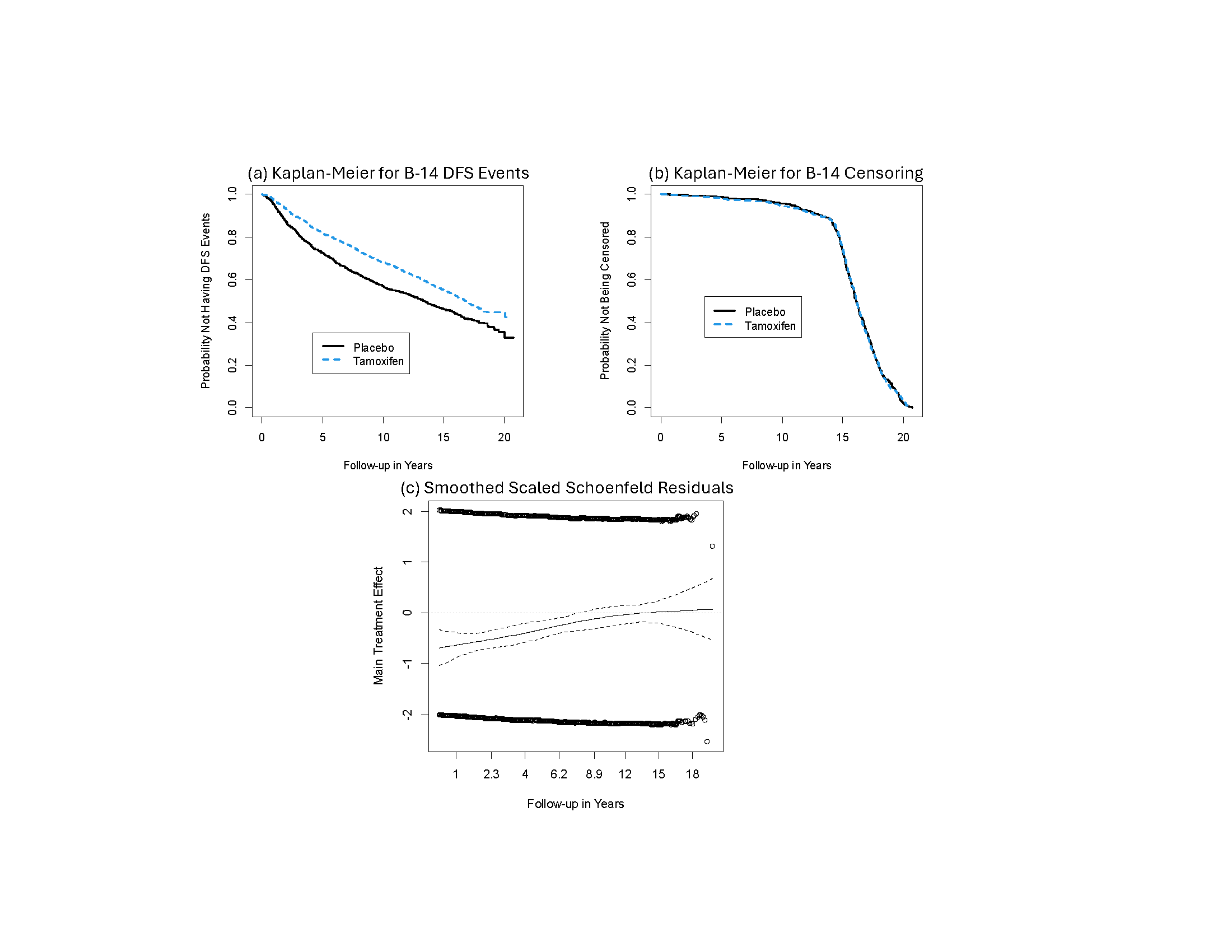}}%
    \vspace{-0.7in}
    \caption{Kaplan--Meier estimates of the disease-free survival (DFS) distribution and censoring distribution, and the smoothed scaled Schoenfeld residual plot, for the B-14 data set.}
    \label{fig:b14_plots}
\end{figure}

In this paper, we propose a simple and practical approach for recovering the target hazard ratio, defined as the hazard ratio from the correctly specified proportional hazards model that includes the omitted covariate. Under the assumptions of a Weibull baseline event-time distribution, unit variance of the omitted covariate, and a univariate frailty term for the unobservable omitted covariate, our method identifies the value that would have generated a band of survival curves containing the observed survival curves as most as possible, as will be formally defined in Section 6. 

The remainder of the paper is organized as follows. In Section 2, we define the correctly specified and misspecified proportional hazards models. In Section 3, we review and extend bias expressions under commonly used univariate frailty models. In Section 4, we review existing approaches to bias correction under model misspecification. In Section 5, we present simulation studies evaluating the extent to which frailty-based modeling and the existing analytic method recover the true hazard ratio under practical scenarios. A simulation-based approach based on so-called Tightest Survival Band (TSB) to recover the target hazard ratio is presented in Section 6, together with numerical examples and extensive simulation studies. In Section 7, we illustrate the proposed method using the NSABP B-14 breast cancer trial data. Section 8 concludes with a discussion, and Appendix A provides a formal consistency result for the proposed TSB estimator. Supplementary tables and figures referenced throughout are provided in the Supplementary Materials.

\section{Models}

Suppose the event time follows the Cox proportional hazards model \cite{cox1972,cox1976},
\begin{equation}
    h(t;X,Z)=h_0(t)\exp(\beta X+\theta Z),
\end{equation}
or, equivalently, in terms of the survival function,
\begin{equation}
    S^{(I)}(t;X,Z)=S_0(t)^{\exp(\beta X+\theta Z)},
    \label{eqn;1}
\end{equation}
where $h_0(t)$ is the baseline hazard function, $X$ is a binary treatment indicator, and $Z$ is a continuous covariate. When both $X$ and $Z$ are observed, model \eqref{eqn;1} is correctly specified and the proportional hazards assumption holds; we refer to this as Model I.

Throughout the paper, we also consider two related models. In Model II, the omitted covariate $Z$ is unobserved and is instead absorbed into the baseline distribution through the univariate frailty term $W=\exp(\theta Z)$, thereby forcing a proportional hazards structure \cite{oakesjeong1998,henderson1999}. Model III is the marginal model, or univariate frailty model \cite{vaupel1979}, which explicitly accounts for the distribution of the omitted covariate. In terms of the survival function, Models II and III are given by
\begin{equation}
    S^{(II)}(t;X)=E_W[\exp\{-H_0(t)W\}]^{e^{\beta X}}
    =L(H_0(t))^{e^{\beta X}},
    \label{eqn;1.mis}
\end{equation}
and
\begin{equation}
    S^{(III)}(t;X)=E_W[\exp\{-e^{\beta X}H_0(t)W\}]
    =L(e^{\beta X}H_0(t)),
    \label{eqn;2}
\end{equation}
respectively, where $H_0(t)=-\log\{S_0(t)\}$ is the baseline cumulative hazard function and
\[
L(v)=\int \exp(-vw)\,dF_W(w)
\]
is the Laplace transform of the frailty distribution $F_W(\cdot)$. Henderson and Oman \cite{henderson1999} showed that, for small $\beta$, the survival function under the misspecified model that omits $Z$ is well approximated by Model II.

\section{Bias Factors Under Univariate Frailty Models}

Under Model III in \eqref{eqn;2}, using the fact that the marginal density is the negative derivative of the marginal survival function, we have
\begin{eqnarray*}
f(t;X)
&=&
-\frac{dS(t;X)}{dt} \\
&=&
E_W[W\exp\{-H_0(t)We^{\beta X}\}]\,e^{\beta X}\frac{dH_0(t)}{dt} \\
&=&
-L^{\prime}(H_0(t)e^{\beta X})\,e^{\beta X}\frac{dH_0(t)}{dt}.
\end{eqnarray*}
Therefore, the marginal hazard function can be written as
\[
h^{(m)}(t;X)=
-\frac{L^{\prime}(H_0(t)e^{\beta X})}{L(H_0(t)e^{\beta X})}
\,e^{\beta X}\frac{dH_0(t)}{dt}.
\]
Thus, the marginal log-hazard ratio under Model III can be expressed as \cite{balan2020,therneau2000}
\begin{equation}
\log HR^{(m)}(t)
=
\log\frac{h^{(m)}(t;X=1)}{h^{(m)}(t;X=0)}
=
\beta
+
\log
\frac{L^{\prime}(H_0(t)e^{\beta})L(H_0(t))}
     {L(H_0(t)e^{\beta})L^{\prime}(H_0(t))}.
\label{eqn;6.1}
\end{equation}
This expression shows that the target log-hazard ratio $\beta$ from the correctly specified model is distorted by the second term in \eqref{eqn;6.1}, which depends on the baseline cumulative hazard function, the true main effect parameter $\beta$, and the Laplace transform of the frailty distribution. We review and extend these bias factors for two popular univariate frailty models: the gamma and log-normal models.

\subsection{Gamma Frailty Model}

In addition to its mathematical convenience, Balan and Putter \cite{balan2020} noted that the phenomenon of a vanishing hazard ratio can be well explained by a univariate gamma frailty model, as will also be reviewed below. Abbring and van den Berg \cite{abbring2007} further justified the usefulness of the gamma frailty model by showing that, for a broad class of univariate frailty models, the frailty distribution among survivors converges to a gamma distribution as time tends to infinity under mild regularity conditions.

When the frailty term $W=\exp(\theta Z)$ follows a gamma distribution with mean 1 and variance $1/\kappa$, the corresponding Laplace transform is
\[
L(v)=\left(\frac{\kappa}{\kappa+v}\right)^{\kappa},
\]
which converges to $e^{-v}$ as $\kappa \rightarrow \infty$, equivalently as $\mathrm{Var}(W)\rightarrow 0$. Note that
\[
E\{\log(W)\}=E(\theta Z)=\Psi(\kappa)-\log(\kappa)
\]
and
\[
\mathrm{Var}\{\log(W)\}=\mathrm{Var}(\theta Z)=\Psi^{\prime}(\kappa),
\]
where $\Psi(\cdot)$ and $\Psi^{\prime}(\cdot)$ denote the digamma and trigamma functions, respectively. Therefore, standardizing the log-gamma random variable requires first subtracting its mean, $\Psi(\kappa)-\log(\kappa)$, and then dividing by $\sqrt{\Psi^{\prime}(\kappa)}$. When $\mathrm{Var}(Z)=1$, there is a direct relationship between the frailty parameter $\kappa$ and the omitted-covariate effect $\theta$:
\[
\theta^2=\Psi^{\prime}(\kappa),
\]
which does not have the solutions near $\theta \approx 0$.

Under Model III, the marginal survival function and the derivative of the Laplace transform are given by
\[
S^{(GAM)}(t;X)\equiv L(H_0(t)e^{\beta X})
=
\left(\frac{\kappa}{\kappa+H_0(t)e^{\beta X}}\right)^{\kappa},
\]
and
\[
L^{\prime}(H_0(t)e^{\beta X})
=
-
L(H_0(t)e^{\beta X})
\left(\frac{\kappa}{\kappa+H_0(t)e^{\beta X}}\right),
\]
respectively. Hence, the bias term in \eqref{eqn;6.1} simplifies to
\[
\mathrm{Bias}^{(GAM)}(t)
=
\log\left(\frac{\kappa+H_0(t)}{\kappa+H_0(t)e^{\beta}}\right).
\]
Note that this bias approaches $-\beta$ as $\kappa \rightarrow 0$ (i.e., $\mathrm{Var}(W)\rightarrow \infty$), and 0 as $\beta \rightarrow 0$ or $\kappa \rightarrow \infty$ (i.e., $\mathrm{Var}(W)\rightarrow 0$), provided that $\kappa \gg H_0(\tau)$, where $\tau$ denotes the largest event time. It is also straightforward to verify that the bias factor on the hazard-ratio scale is monotone increasing in $H_0(t)$ when $\beta>0$ and monotone decreasing in $H_0(t)$ when $\beta<0$.

\subsection{Log-normal Frailty Model}

The omitted covariate may correspond to a single biomarker measurement or to a linear combination of such measurements, such as gene expression levels or hormone receptor levels. Without loss of generality, we may assume that it has been normalized to follow a standard normal distribution. Balan and Putter \cite{balan2020} noted that, when matched on mean and variance, or equivalently on the coefficient of variation, the log-normal distribution is virtually indistinguishable from the inverse Gaussian distribution, which can explain a vanishing hazard ratio much like the gamma frailty distribution.

Suppose the omitted covariate $Z$ follows a standard normal distribution. Then $\log(W)=\theta Z$ follows a normal distribution with mean 0 and variance $\theta^2$, and $W$ follows a log-normal distribution with
\[
E(W)=\exp(\theta^2/2)
\]
and
\begin{equation}
\mathrm{Var}(W)=\exp(\theta^2)\{\exp(\theta^2)-1\}.
\label{eqn;6.11}
\end{equation}
Inverting (\ref{eqn;6.11}) gives
\begin{equation}
\theta^2=
\log\left(\frac{1+\sqrt{1+4\,\mathrm{Var}(W)}}{2}\right).
\label{eqn;6.12}
\end{equation}
A practical caution is that the variance of the log-normal frailty can increase rapidly as the omitted-covariate effect $\theta$ grows. Despite its usefulness in practice, the log-normal distribution does not have a closed-form Laplace transform, which hinders further analytical simplification of the bias formula.

\section{Existing Work on Bias Correction under Model Misspecification}

Struthers and Kalbfleisch \cite{struthers1986} investigated the properties of the estimator under misspecification of the proportional hazards model, for example when covariates are omitted. Henderson and Oman \cite{henderson1999} modified the partial likelihood estimating equation under random censoring by modeling the exponentiated omitted covariate as a univariate frailty term and approximating the solution near the null value of the main effect. Under the additional assumption that all covariates are centered and in the absence of censoring, the bias factor simplifies to $\nu=\kappa/(1+\kappa)$ under the gamma frailty model with mean 1 and variance $1/\kappa$. Furthermore, under the Koziol--Green (KG) censoring model \cite{koziol1976}, in which the proportional hazards assumption holds between the baseline event-time and censoring-time distributions, the bias factor further simplifies to
\begin{equation}
\nu=
\frac{\kappa \alpha_0^2 E_{\kappa}(\kappa \alpha_0)+(1-\alpha_0)e^{-\kappa \alpha_0}}
     {(\kappa+1)E_{\kappa+1}(\kappa \alpha_0)},
\label{eqn;4.0}
\end{equation}
where
\[
E_a(b)=\int_1^{\infty}\frac{\exp(-bv)}{v^a}\,dv
\]
and $\alpha_0$ is the proportionality parameter, which can be expressed as a function of the baseline censoring proportion $p_0$ as
$\alpha_0=p_0/(1-p_0)$. Equation (\ref{eqn;4.0}) may be useful in practice when the true values of $\alpha_0$ and $\kappa$ are known.

\section{Does Univariate Frailty Analysis Help Recover the Target Hazard Ratio?}

The observation that treatment effect estimates from the two most commonly used univariate frailty models, gamma and log-normal, can differ substantially from one another, even though both are often closer to the target value than the estimate from a model that completely ignores the omitted covariate, raises a practical question: to what extent can univariate frailty analysis help recover the target hazard ratio, and how does its performance depend on the underlying frailty distribution?

To address this question, we conducted a simulation study under independent uniform random censoring, a setting frequently encountered in clinical trial data. A group indicator $X$ was generated from a Bernoulli distribution with success probability 0.5, and the frailty term $W=\exp(\theta Z)$ was generated from either a gamma distribution with mean 1 and variance $1/\kappa$, or a log-normal distribution such that $\log(W)$ had mean 0 and variance $\theta^2$. For a fair comparison, the variances of the two frailty distributions were matched using
\[
\theta_{LN}=\sqrt{\log\left\{\frac{1+\sqrt{1+4/\kappa}}{2}\right\}},
\]
where $\theta_{LN}$ is the omitted covariate effect, or equivalently the standard deviation on the log scale, under the log-normal frailty model, and $\kappa$ satisfies $\Psi^{\prime}(\kappa)=\theta_{GAM}$, where $\theta_{GAM}$ denotes the omitted covariate effect under the gamma frailty model.

Under the Cox proportional hazards model \cite{cox1972,cox1976} with a Weibull baseline hazard having scale parameter $\sigma$ and shape parameter $\rho$, the true event time was generated using the probability integral transform:
\[
T=\left[\frac{-\log(1-V)}{\sigma \exp(\beta X)W}\right]^{1/\rho},
\]
where $V\sim U(0,1)$. The censoring time was generated independently from a uniform distribution, $C\sim U(0,a)$, where $a$ controls the censoring proportion. The observed survival time was then defined as $Y=\min(T,C)$.

The baseline Weibull parameters were set to $(\sigma,\rho)=(0.2,2.0)$ or $(0.4,1.0)$. Supplementary Figure S1 shows the two baseline Weibull distributions used in the simulation study: a right-skewed distribution with intermediate follow-up and an exponential distribution with longer follow-up. For values of the target hazard ratio $\exp(\beta)$ equal to 0.3, 0.5, 0.8, and 1.0, and for omitted covariate effects under the gamma frailty model of $\theta_{GAM}=1.0$, 1.5, and 2.0, we compared treatment effect estimates across several analyses. The values of $\theta_{GAM}$ correspond to $\theta_{LN}=0.624$, 0.735, and 0.808 under the log-normal frailty model, yielding matched frailty variances of 0.701, 1.232, and 1.766, respectively.

Specifically, we compared the estimated treatment effect parameter under the following settings:
\begin{enumerate}
    \item the correctly specified proportional hazards model including the omitted covariate $Z$, when the underlying frailty distribution was gamma or log-normal, denoted by $\hat \beta^{(adj)}_{GAM}$ and $\hat \beta^{(adj)}_{LN}$;
    \item the proportional hazards model omitting $Z$, when the underlying frailty distribution was gamma or log-normal, denoted by $\hat \beta^{(omit)}_{GAM}$ and $\hat \beta^{(omit)}_{LN}$;
    \item the univariate gamma frailty analysis, when the underlying frailty distribution was correctly specified as gamma or misspecified as log-normal, denoted by $\hat \beta^{(PH)}_{GAM}$ and $\hat \beta^{(ME)}_{GAM}$; and
    \item the univariate log-normal frailty analysis, when the underlying frailty distribution was correctly specified as log-normal or misspecified as gamma, denoted by $\hat \beta^{(ME)}_{LN}$ and $\hat \beta^{(PH)}_{LN}$.
\end{enumerate}
These comparisons were used to assess how closely each analysis recovered the target treatment effect. We also included the analytic correction formula in (\ref{eqn;4.0}) when the true frailty distribution was gamma, denoted by $\hat \beta^{(HO)}_{GAM}$, to evaluate its empirical performance under practical scenarios, although this comparison is not entirely parallel because it relies on different assumptions.

We assumed uniform censoring on $(0,10)$. The sample size was 500 per group, and each scenario was replicated 2,000 times. Results were similar for smaller and larger sample sizes (not shown). The censoring proportion ($CP_{GAM}$ or $CP_{LN}$) depended on the target treatment effect $\beta$, the omitted covariate effect $\theta$, the Weibull parameters $\sigma$ and $\rho$, and the assumed frailty distribution.

Table \ref{table_1} summarizes the simulation results in terms of the empirical medians and interquartile ranges (IQRs) of the estimates. First, when the omitted covariate was included in the proportional hazards model, the treatment effect estimates were, as expected, close on average to the true parameter values under both frailty distributions. Second, when the exponentiated omitted covariate was assumed to follow a gamma distribution, the estimates from the gamma frailty analysis were noticeably closer to the true parameter values than those obtained either by completely ignoring the omitted covariate or by misspecifying the frailty distribution as log-normal. When the omitted covariate followed a standard normal distribution, corresponding to a log-normal frailty model, misspecification of the frailty distribution as gamma did not increase the bias substantially relative to the analysis that ignored the omitted covariate, although the bias tended to increase as the omitted covariate effect became larger. Overall, these findings suggest that assuming a log-normal distribution for the exponentiated omitted covariate may be the more practically robust choice for recovering the target treatment effect on average across both gamma and log-normal frailty analyses. The analytic correction of Henderson and Oman tended to overestimate the treatment effect, particularly when the true treatment effect increases, as expected, and the omitted covariate effect was small. Across the scenarios considered, the censoring proportion ranged from approximately 28\% to 63\% under the gamma frailty model and from approximately 20\% to 42\% under the log-normal frailty model.

\section{A Simulation-Based Graphical Approach to Recover the Target Hazard Ratio}

The analytic expressions presented in Section 4 are useful for understanding the asymptotic behavior of the bias factors. In practice, however, only a single data set is observed from an unknown population. Any attempt to recover the target parameter values analytically from that observed data set may therefore be sensitive to sampling variability in addition to the bias induced by omission of the covariate, especially when the sampling variability of the parameter estimates is large, as can occur for $\kappa$ under the gamma frailty model.

To address this limitation, we propose a simulation-based graphical approach that does not rely on parameter estimates obtained from the observed data or on direct use of frailty-model theory for bias correction. The main idea is that, under the minimal assumptions of a Weibull baseline event-time distribution and unit variance of the unobservable omitted covariate, the target population parameters are those that would generate a band of survival curves containing the observed survival curves with maximal empirical coverage. We refer to this as the Tightest Survival Band (TSB) — tightest in the sense of the narrowest band given the observed sample size. The optimization criterion is based on the simple observation that a function $f(x)$ is contained between two functions $g(x)$ and $h(x)$ if and only if
\[
\{f(x)-g(x)\}\{f(x)-h(x)\}\le 0
\]
for all $x$. We now describe the TSB procedure.

\subsection{Tightest Survival Band (TSB)}

In Section 5, we observed that univariate frailty analysis may help recover the target treatment effect to some extent \emph{on average} under a range of scenarios, particularly under the log-normal frailty model. To further improve bias reduction for a single observed sample, we propose the following simulation-based TSB procedure.

\begin{enumerate}
\item Using the \emph{observed data}, compute the Kaplan--Meier estimates stratified by the group indicator $X$, and evaluate them on a common grid of time points $g_i$ $(i=1,2,\ldots,n_g)$. Denote these estimates by $\hat S^{(obs)}_x(g_i)$ for $x=0,1$.
\item For candidate parameter values of $\beta$, $\theta$, $\sigma$, and $\rho$, repeat the following steps $N$ times $(j=1,2,\ldots,N)$:
    \begin{enumerate}
        \item Determine $\kappa$ from $\Psi'(\kappa)=\theta^2$ under the assumption $\mathrm{Var}(Z)=1$ if a gamma frailty distribution is used for the unobservable omitted covariate.

\begin{landscape}
\renewcommand{\arraystretch}{1.1}

\begin{table}[p]
	\centering
	\caption{Comparison of treatment effect estimates, summarized by empirical medians and interquartile ranges (IQRs), from adjusted, unadjusted, gamma frailty, and log-normal frailty models under correct and incorrect specification of the underlying frailty distribution.}
	\label{table_1}
	\resizebox{1.6\textwidth}{!}{%
		\begin{threeparttable}
\begin{tabular}{*{16}{r}}
\toprule
$\sigma$ & $\rho$ & $\beta$ & $HR$ & $\theta_{GAM}$ & $CP_{GAM}$ & $\hat \beta^{(adj)}_{GAM}$ &
$\hat \beta^{(omit)}_{GAM}$ & $\hat \beta^{(PH)}_{GAM}$ & $\hat \beta^{(ME)}_{GAM}$ &
$\hat \beta^{(HO)}_{GAM}$ & $CP_{LN}$ & $\hat \beta^{(adj)}_{LN}$ &
$\hat \beta^{(omit)}_{LN}$ & $\hat \beta^{(PH)}_{LN}$ & $\hat \beta^{(ME)}_{LN}$ \\
\midrule
0.20 & 2 & -1.204 & 0.3 & 1.0 & 0.360 &
-1.203 (0.114) & -0.830 (0.114) & -1.158 (0.215) &
-0.987 (0.144) & -1.840 (0.896) & 0.285 &
-1.207 (0.110) & -1.099 (0.111) & -1.142 (0.153) & -1.169 (0.137) \\

 &  &  &  & 1.5 & 0.466 &
-1.205 (0.127) & -0.579 (0.118) & -1.046 (0.300) &
-0.669 (0.146) & -1.616 (2.497) & 0.289 &
-1.206 (0.110) & -1.024 (0.111) & -1.109 (0.181) & -1.121 (0.136) \\

 &  &  &  & 2.0 & 0.551 &
-1.205 (0.138) & -0.436 (0.128) & -0.870 (0.359) &
-0.476 (0.142) & -1.215 (1.756) & 0.293 &
-1.206 (0.110) & -0.970 (0.114) & -1.087 (0.196) & -1.080 (0.134) \\ \cline{3-16}

 &  & -0.693 & 0.5 & 1.0 &  0.317 &
-0.691 (0.104) & -0.472 (0.107) & -0.621 (0.202) &
-0.535 (0.126) & -0.953 (0.277) & 0.244 &
-0.696 (0.102) & -0.630 (0.104) & -0.650 (0.119) & -0.665 (0.112) \\

 &  &  &  & 1.5 & 0.431 &
-0.694 (0.118) & -0.325 (0.113) & -0.512 (0.290) &
-0.355 (0.127) & -0.828 (0.884) & 0.248 &
-0.697 (0.103) & -0.585 (0.103) & -0.620 (0.134) & -0.628 (0.115) \\

 &  &  &  & 2.0 & 0.525 &
-0.693 (0.128) & -0.242 (0.121) & -0.368 (0.301) &
-0.255 (0.132) & -0.592 (0.659) & 0.252 &
-0.695 (0.100) & -0.553 (0.103) & -0.600 (0.146) & -0.599 (0.119) \\ \cline{3-16}

 &  & -0.223 & 0.8 & 1.0 & 0.282 &
-0.222 (0.098) & -0.153 (0.104) & -0.157 (0.116) &
-0.161 (0.110) & -0.245 (0.126) & 0.214 &
-0.225 (0.098) & -0.201 (0.095) & -0.203 (0.100) & -0.208 (0.099) \\

 &  &  &  & 1.5 & 0.401 &
-0.225 (0.108) & -0.102 (0.108) & -0.106 (0.123) &
-0.104 (0.113) & -0.212 (0.142) & 0.218 &
-0.225 (0.097) & -0.185 (0.098) & -0.187 (0.102) & -0.193 (0.102) \\

 &  &  &  & 2.0 & 0.501 &
-0.221 (0.124) & -0.074 (0.118) & -0.075 (0.125) &
-0.074 (0.120) & -0.173 (0.131) & 0.221 &
-0.224 (0.095) & -0.175 (0.099) & -0.177 (0.102) & -0.181 (0.102) \\ \cline{3-16}

 &  & 0.000 & 1.0 & 1.0 & 0.268 &
0.003 (0.100) & -0.004 (0.103) & -0.004 (0.103) &
-0.004 (0.104) & -0.057 (0.152) & 0.202 &
0.000 (0.100) & 0.002 (0.097) & 0.002 (0.097) & 0.002 (0.098) \\

 &  &  &  & 1.5 & 0.387 &
-0.002 (0.106) & 0.001 (0.108) & 0.001 (0.108) &
0.001 (0.108) & 0.043 (0.208) & 0.206 &
-0.001 (0.097) & 0.001 (0.098) & 0.001 (0.098) & 0.001 (0.100) \\

 &  &  &  & 2.0 & 0.491 &
0.002 (0.122) & 0.001 (0.116) & 0.001 (0.116) &
0.001 (0.116) & 0.041 (0.203) & 0.209 &
-0.001 (0.097) & 0.002 (0.097) & 0.002 (0.097) & 0.003 (0.099) \\ \hline

0.40 & 1 & -1.204 & 0.3 & 1.0 & 0.501 &
-1.203 (0.128) & -0.942 (0.126) & -1.155 (0.254) &
-1.022 (0.143) & -2.002 (1.122) & 0.417 &
-1.207 (0.119) & -1.123 (0.118) & -1.166 (0.160) & -1.175 (0.136) \\

 &  &  &  & 1.5 & 0.576 &
-1.205 (0.138) & -0.687 (0.136) & -1.074 (0.334) &
-0.747 (0.151) & -1.822 (2.382) & 0.419 &
-1.207 (0.119) & -1.065 (0.117) & -1.136 (0.194) & -1.128 (0.134) \\

 &  &  &  & 2.0 & 0.628 &
-1.210 (0.148) & -0.508 (0.139) & -0.914 (0.391) &
-0.540 (0.149) & -2.011 (1.726) & 0.421 &
-1.207 (0.118) & -1.020 (0.117) & -1.115 (0.208) & -1.090 (0.136) \\ \cline{3-16}

 &  & -0.693 & 0.5 & 1.0 & 0.442 &
-0.690 (0.115) & -0.529 (0.115) & -0.635 (0.204) &
-0.564 (0.125) & -1.018 (0.447) & 0.345 &
-0.696 (0.107) & -0.641 (0.104) & -0.670 (0.132) & -0.669 (0.117) \\

 &  &  &  & 1.5 & 0.535 &
-0.694 (0.127) & -0.380 (0.129) & -0.545 (0.300) &
-0.401 (0.137) & -0.890 (0.853) & 0.351 &
-0.695 (0.104) & -0.605 (0.107) & -0.645 (0.146) & -0.635 (0.117) \\

 &  &  &  & 2.0 & 0.599 &
-0.693 (0.139) & -0.278 (0.132) & -0.411 (0.313) &
-0.289 (0.139) & -0.696 (0.692) & 0.355 &
-0.695 (0.103) & -0.578 (0.109) & -0.629 (0.157) & -0.608 (0.116) \\ \cline{3-16}

 &  & -0.223 & 0.8 & 1.0 & 0.388 &
-0.221 (0.109) & -0.170 (0.110) & -0.178 (0.128) &
-0.174 (0.115) & -0.284 (0.177) & 0.282 &
-0.224 (0.100) & -0.204 (0.101) & -0.209 (0.108) & -0.210 (0.104) \\

 &  &  &  & 1.5 & 0.497 &
-0.225 (0.118) & -0.116 (0.118) & -0.123 (0.142) &
-0.117 (0.123) & -0.235 (0.153) & 0.290 &
-0.224 (0.098) & -0.192 (0.100) & -0.197 (0.111) & -0.197 (0.105) \\

 &  &  &  & 2.0 & 0.572 &
-0.223 (0.130) & -0.087 (0.125) & -0.088 (0.141) &
-0.087 (0.127) & -0.189 (0.159) & 0.296 &
-0.225 (0.096) & -0.183 (0.101) & -0.189 (0.109) & -0.189 (0.105) \\ \cline{3-16}

 &  & 0.000 & 1.0 & 1.0 & 0.363 &
0.003 (0.106) & -0.006 (0.109) & -0.006 (0.110) &
-0.006 (0.110) & 0.045 (0.179) & 0.256 &
-0.001 (0.095) & 0.000 (0.096) & 0.000 (0.097) & 0.000 (0.098) \\

 &  &  &  & 1.5 & 0.479 &
-0.002 (0.118) & 0.003 (0.117) & 0.003 (0.119) &
0.003 (0.118) & 0.061 (0.170) & 0.264 &
-0.001 (0.096) & 0.002 (0.099) & 0.002 (0.099) & 0.002 (0.100) \\

 &  &  &  & 2.0 & 0.560 &
0.000 (0.126) & 0.000 (0.126) & 0.000 (0.126) &
0.000 (0.126) & -0.046 (0.195) & 0.272 &
-0.002 (0.097) & 0.001 (0.096) & 0.001 (0.096) & 0.001 (0.097) \\

\bottomrule
\end{tabular}%

\begin{tablenotes}
\item \underline{Definitions of column names}: $\beta$ and $HR$ denote the true treatment effect parameter and the corresponding hazard ratio ($HR=\exp(\beta)$), respectively; $\theta_{GAM}$ denotes the omitted covariate effect under the gamma frailty model; $CP_{GAM}$ denotes the censoring proportion under the gamma frailty model; $\hat \beta^{(adj)}_{GAM}$ denotes the estimated treatment effect when the omitted covariate $Z$ is included and the true frailty distribution is gamma; $\hat \beta^{(omit)}_{GAM}$ denotes the estimated treatment effect when $Z$ is omitted and the true frailty distribution is gamma; $\hat \beta^{(PH)}_{GAM}$ denotes the estimated treatment effect under the correctly specified gamma frailty model when the true frailty distribution is gamma; $\hat \beta^{(ME)}_{GAM}$ denotes the estimated treatment effect under the misspecified log-normal frailty model when the true frailty distribution is gamma; $\hat \beta^{(HO)}_{GAM}$ denotes the estimated treatment effect from the analytic correction formula \eqref{eqn;4.0} of Henderson and Oman when the true frailty distribution is gamma; $CP_{LN}$ denotes the censoring proportion under the log-normal frailty model; $\hat \beta^{(adj)}_{LN}$ denotes the estimated treatment effect when the omitted covariate $Z$ is included and the true frailty distribution is log-normal; $\hat \beta^{(omit)}_{LN}$ denotes the estimated treatment effect when $Z$ is omitted and the true frailty distribution is log-normal; $\hat \beta^{(PH)}_{LN}$ denotes the estimated treatment effect under the misspecified gamma frailty model when the true frailty distribution is log-normal; and $\hat \beta^{(ME)}_{LN}$ denotes the estimated treatment effect under the correctly specified log-normal frailty model when the true frailty distribution is log-normal.
\end{tablenotes}
		\end{threeparttable}
	}
\end{table}
\end{landscape}

        \item For a sample size $n$ equal to that of the observed data, generate a treatment indicator $X_{jl}$ $(l=1,2,\ldots,n)$ from a Bernoulli distribution with success probability 0.5, and generate a frailty term $W_{jl}$ from either:
        \begin{itemize}
            \item a gamma distribution with mean 1 and variance $1/\kappa$, or
            \item a log-normal distribution such that $\log(W_{jl})$ has mean 0 and variance $\theta^2$.
        \end{itemize}

        \item Under the Weibull baseline assumption, generate event times from
        \[
        T_{jl}
        =
        \left[
        \frac{-\log(1-V_{jl})}
        {\sigma \exp(\beta X_{jl})W_{jl}}
        \right]^{1/\rho},
        \qquad l=1,2,\ldots,n,
        \]
        where $V_{jl}\sim U(0,1)$.

        \item Generate censoring times from a mixture of two uniform distributions,
        \[
        C_{jl} \sim \omega_1 U(0,a_1) + (1-\omega_1)U(a_1,a_2),
        \qquad l=1,2,\ldots,n,
        \]
        where $0\le \omega_1 \le 1$ is the proportion of censored observations occurring during the first interval $[0,a_1]$, and $a_2$ is the maximum follow-up time. As a special case, if censoring is assumed to be uniform over the entire study period, say $C_{jl}\sim U(0,a)$, then one may set $\omega_1=1$ and $a_1=a$.

        \item Define the observed survival time and event indicator by
        \[
        Y_{jl}=\min(T_{jl},C_{jl})
        \quad\text{and}\quad
        \Delta_{jl}=I(T_{jl}\le C_{jl}).
        \]

        \item Using this \emph{simulated data set}, compute the Kaplan--Meier estimates stratified by $X$, and evaluate them on the same common grid $g_i$ $(i=1,2,\ldots,n_g)$. Denote these estimates by $\hat S^{(sim)}_{x,j}(g_i)$ for $x=0,1$ and $j=1,2,\ldots,N$.
    \end{enumerate}

\item For each group $x$ and each grid point $g_i$, compute the minimum and maximum values of the simulated survival estimates across the $N$ simulated data sets:
\[
LO_x(g_i)=\min_{j=1,\ldots,N}\left\{\hat S^{(sim)}_{x,j}(g_i)\right\},
\qquad
UP_x(g_i)=\max_{j=1,\ldots,N}\left\{\hat S^{(sim)}_{x,j}(g_i)\right\}.
\]
The interval
\[
SPSR_x(g_i)=[LO_x(g_i),\,UP_x(g_i)]
\]
will be referred to as the Simulated Pointwise Survival Range (SPSR) for group $x$ at time point $g_i$.

\item For each group $x$, calculate the number of grid points at which the observed survival estimates are contained in the SPSR:
\[
f_x
=
\sum_{i=1}^{n_g}
I\left(
\left\{\hat S^{(obs)}_x(g_i)-LO_x(g_i)\right\}
\left\{\hat S^{(obs)}_x(g_i)-UP_x(g_i)\right\}
\le 0
\right).
\]

\item Define the overall proportion of grid points at which the observed survival estimates are contained in the SPSR across both groups as the \emph{coverage coefficient}:
\[
CC=\frac{\sum_{x=0}^1 f_x}{2n_g}.
\]

\item Using a grid search, identify the values of $\beta$, $\theta$, $\sigma$, and $\rho$ that maximize $CC$. The band consisting of SPSRs over all grid points that gives the maximal $CC$ will be referred to as Tightest Survival Band (TSB). If the maximizer is not unique and the surface of $CC$ is flat over a range of parameter values, then use the midpoint of the minimum and maximum values in that range, or equivalently the median of the maximizing set, as a practical summary.
\end{enumerate}

The simulation results presented later suggest that the choice of frailty distribution has only a minimal effect on recovery of the target treatment effect. In Appendix A, we provide a formal justification for this empirical behavior by establishing that, under standard identifiability conditions and almost sure convergence of the pointwise extrema of simulated Kaplan–Meier curves to their true sampling range, the TSB grid maximizer concentrates around the true parameter vector, at the sample size of the observed data, as the number of simulation replicates increases.

\subsection{A Numerical Example}

For numerical illustration, we first assumed that the frailty term, i.e., the exponentiated omitted covariate, followed a gamma distribution with mean 1 and variance $1/\kappa$, where $\kappa=0.57$ was obtained from the equation $\Psi^{\prime}(\kappa)=\theta^2$ with the true omitted covariate effect set to $\theta=2$. This implies a frailty variance of $1/\kappa=1.75$. We chose $(\sigma,\rho)=(0.2,2.0)$ for the baseline Weibull distribution, corresponding to a unimodal distribution skewed to the right with intermediate follow-up, and set the true hazard ratio to $\exp(\beta)=0.3$, that is, $\beta=-1.20$. The sample size was set to 1,500 per group to mimic the NSABP B-14 data. Under this scenario, the overall censoring proportion was approximately 56.3\%.

Using a fixed random seed, we generated one data set from this population. When the omitted covariate $Z$ was completely ignored, the naive estimate of the treatment effect parameter was $\hat\beta=-0.40$. A univariate gamma frailty analysis produced $\hat\beta=-1.23$, which was much closer to the true value. The estimated gamma frailty parameter was $\hat\kappa=0.19$, corresponding to an estimated omitted covariate effect of $\hat\theta=5.13$.

Next, under the same Weibull baseline setting, we generated another data set in which $\log(W)$ was normally distributed, corresponding to a log-normal frailty model. In this case, the true omitted covariate effect was set to $\theta=0.8$ so that the frailty variance would be similar to that under the gamma frailty setting. A univariate log-normal frailty analysis yielded $\hat\beta=-1.05$ for the treatment effect and $\hat\theta=0.42$ for the omitted covariate effect. These two realized data sets may be regarded as the \emph{observed} data sets in practice and will be referred to as such below.

It is natural to begin the grid search for the TSB procedure using parameter estimates obtained from the observed data as initial values. Figure \ref{fig:numeric} shows the observed survival curves (red solid lines) together with the overlaid TSBs based on 1,000 simulated replicates (gray dotted lines). Figures \ref{fig:numeric}(a) and (b) correspond to the gamma frailty case, using the parameter estimates from the observed data and the true parameter values, respectively. Figures \ref{fig:numeric}(c) and (d) show the analogous results for the log-normal frailty case.
\begin{center}
\begin{figure}[!htbp]
\center
    \includegraphics[width = 1.0\linewidth,keepaspectratio]{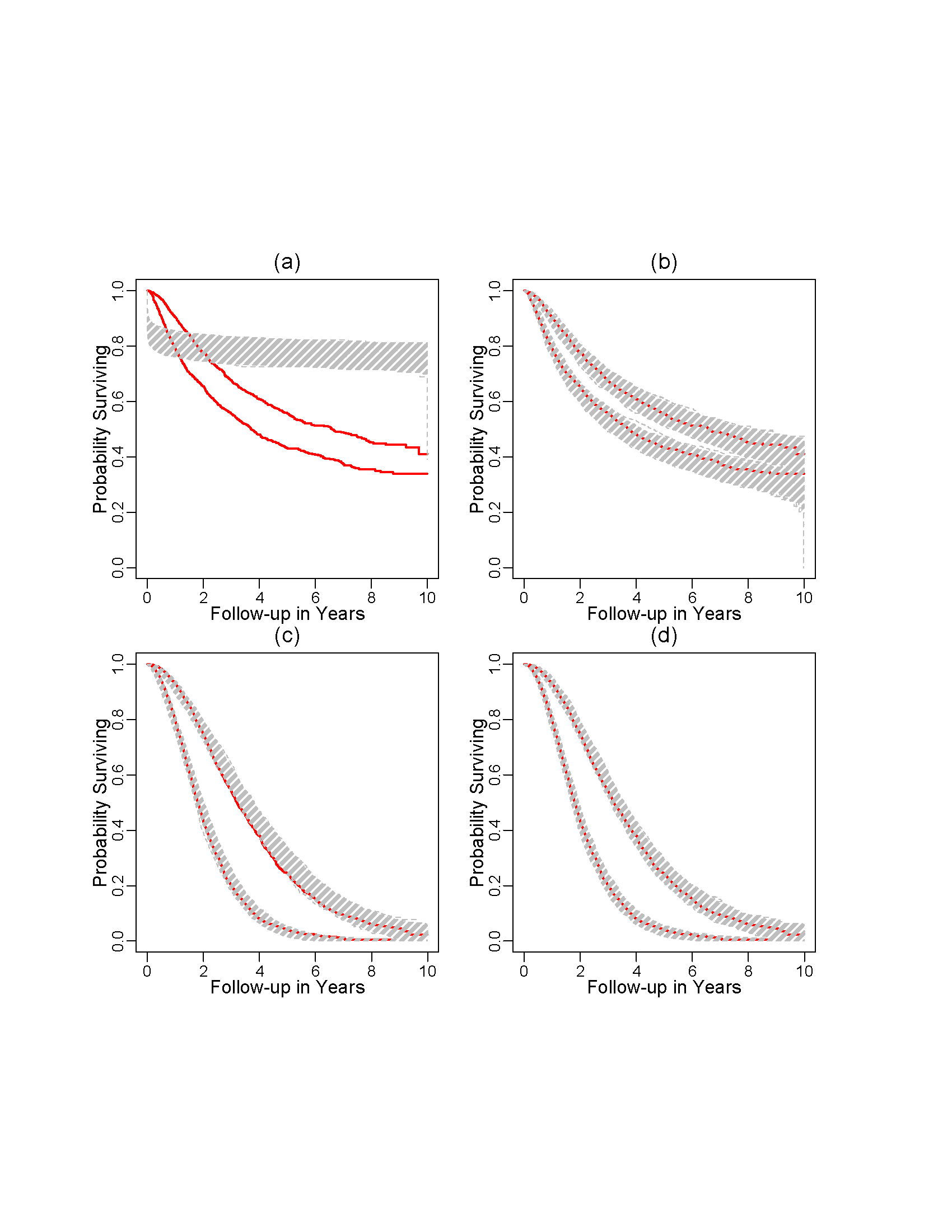}
    \vspace{-1.0in}
    \caption{A numerical illustration of the TSB method: (a) using parameter estimates from the univariate gamma frailty analysis; (b) using the true parameter values under the gamma frailty model; (c) using parameter estimates from the univariate log-normal frailty analysis; and (d) using the true parameter values under the log-normal frailty model.}
    \label{fig:numeric}
\end{figure}
\end{center}

For the gamma frailty setting, the observed survival curves are barely contained within the TSB when the parameter estimates from the observed data are used (Figure \ref{fig:numeric}(a)), whereas they are fully contained when the true parameter values are used (Figure \ref{fig:numeric}(b)), as desired. In fact, the corresponding $CC$ values were 0.06 and 1.00, respectively. For the log-normal frailty setting, the TSB contains the observed survival curves reasonably well in both cases, with $CC$ values of 0.80 and 0.92 when using the estimated and true parameter values, respectively. Overall, regardless of the assumed frailty distribution, the TSB procedure captures the observed survival curves well when evaluated at the true parameter values.

\subsection{Impact of Treatment Effect and Sample Size on the Coverage Coefficient ($CC$)}

In the numerical example above, the sample size was as large as 1,500 per group to mimic the NSABP B-14 data, and the treatment effect was also strong, with a true hazard ratio of 0.3, making the difference between the two groups readily visible. It is therefore of interest to examine how the proposed approach performs across a wider range of sample sizes and treatment-effect magnitudes, since the variability of the simulated survival curves generated by the TSB procedure is expected to increase as the sample size decreases.

\subsubsection{When $\theta$, $\sigma$, and $\rho$ are fixed}

We first conducted a simulation study to assess the behavior of the optimization function ($CC$) as a function of the treatment effect parameter $\beta$, conditional on the true values of the remaining parameters $\theta$, $\sigma$, and $\rho$. Simulations were performed with sample sizes per group of 100, 250, 500, and 1,000, and with true hazard ratios of 0.3, 0.5, and 0.8 under the two Weibull baseline scenarios described earlier.

Under the gamma frailty setting, the omitted covariate effect was fixed at $\theta=2.0$, corresponding to $\kappa=0.57$ and hence $\mathrm{Var}(W)=1.75$. The results in Supplementary Table S1 show that the surface of the optimization function, quantified by $CC$, tends to be flat near its maximum, and that the width of the flat interval, $\exp(\hat\beta_U)-\exp(\hat\beta_L)$, increases as the sample size decreases. For example, when $\sigma=0.4$, $\rho=1.0$, and $\exp(\beta)=0.5$, the flat intervals were [0.36, 0.62], [0.35, 0.76], [0.28, 0.89], and [0.14, 1.05] as $n$ decreased from 1,000 to 500 to 250 to 100, while $\max(CC)\approx 1$ in all cases.

If the recovered hazard ratio is summarized by the midpoint of the flat interval,
\[
\exp(\bar{\hat\beta})=\{\exp(\hat\beta_U)+\exp(\hat\beta_L)\}/2,
\]
then the recovered treatment effect tends to approach the true value as the sample size increases, although some fluctuation remains because of sampling variability. In particular, when $n=1{,}000$, the true hazard ratios were recovered accurately under both Weibull baseline scenarios. Supplementary Table S1 presents the corresponding results for the log-normal frailty setting. Taken together, Table 2 and Supplementary Table S1 indicate that, regardless of the underlying frailty distribution, the proposed TSB method recovers the target hazard ratio reasonably well when the other parameters are fixed, especially for moderate to large sample sizes. This empirical pattern, in which the flat maximum region narrows as $n$ increases, is consistent with the asymptotic argument given in Appendix A.

\begin{table}[!htbp]
	\centering
	\caption{Effects of treatment effect size and sample size on recovery of the target hazard ratio when the underlying frailty distribution is gamma, conditional on the omitted covariate effect and baseline Weibull parameters}
	\label{table2}
	\resizebox{1.0\textwidth}{!}{%
		\begin{threeparttable}
\begin{tabular}{l|ccccccc}
\hline\hline
\multicolumn{1}{c|}{}
  & Sample Size ($n$) & $\beta$ & $HR=\exp(\beta)$ & max($CC$) & $\exp(\hat \beta_L)$ & $\exp(\hat \beta_U)$ & $\exp(\bar{\hat \beta})$ \\  \hline
\multirow{9}{*}{$\sigma=0.2,\rho=2.0$} 
&100 & -1.204 & 0.3 & 1.000 & 0.09 & 0.79 & 0.440 \\
& & -0.693 & 0.5 & 1.000 & 0.14 & 0.86 & 0.500 \\
& & -0.223 & 0.8 & 1.000 & 0.23 & 0.99 & 0.610 \\
\cline{3-8}
&250 & -1.204 & 0.3 & 0.985 & 0.17 & 0.50 & 0.335 \\
& & -0.693 & 0.5 & 0.985 & 0.27 & 0.84 & 0.555 \\
& & -0.223 & 0.8 & 0.985 & 0.44 & 1.29 & 0.865 \\
\cline{3-8}
&500 & -1.204 & 0.3 & 1.000 & 0.21 & 0.41 & 0.310 \\
& & -0.693 & 0.5 & 1.000 & 0.34 & 0.75 & 0.545 \\
& & -0.223 & 0.8 & 1.000 & 0.84 & 1.18 & 1.010 \\
\cline{3-8}
&1000 & -1.204 & 0.3 & 1.000 & 0.22 & 0.37 & 0.295 \\
& & -0.693 & 0.5 & 1.000 & 0.36 & 0.66 & 0.510 \\
& & -0.223 & 0.8 & 1.000 & 0.58 & 1.01 & 0.795 \\
\hline
\multirow{9}{*}{$\sigma=0.4, \rho=1.0$} 
&100 & -1.204 & 0.3 & 1.000 & 0.09 & 0.64 & 0.365 \\
& & -0.693 & 0.5 & 1.000 & 0.14 & 1.05 & 0.595 \\
& & -0.223 & 0.8 & 1.000 & 0.21 & 1.60 & 0.905 \\
\cline{3-8}
&250 & -1.204 & 0.3 & 0.985 & 0.17 & 0.51 & 0.340 \\
& & -0.693 & 0.5 & 0.985 & 0.28 & 0.89 & 0.585 \\
& & -0.223 & 0.8 & 0.985 & 0.46 & 1.20 & 0.830 \\
\cline{3-8}
&500 & -1.204 & 0.3 & 1.000 & 0.22 & 0.42 & 0.320 \\
& & -0.693 & 0.5 & 1.000 & 0.35 & 0.76 & 0.555 \\
& & -0.223 & 0.8 & 1.000 & 0.58 & 1.14 & 0.860 \\ 
\cline{3-8}
&1000 & -1.204 & 0.3 & 1.000 & 0.22 & 0.38 & 0.300 \\
& & -0.693 & 0.5 & 1.000 & 0.36 & 0.62 & 0.490 \\
& & -0.223 & 0.8 & 1.000 & 0.60 & 1.01 & 0.805 \\ 
      \hline \hline
\end{tabular}
\begin{tablenotes}
\item \underline{Definitions of column names}: $\beta$ and $HR$ denote the true treatment effect parameter and its associated hazard ratio ($HR=\exp(\beta)$), respectively; max($CC$) is the maximum value of the coverage coefficient; $\exp(\hat \beta_L)$ and $\exp(\hat \beta_U)$ are the minimum and maximum values of the recovered hazard ratio attaining max($CC$); and $\exp(\bar{\hat \beta})$ is the midpoint of that interval.
\end{tablenotes}
		\end{threeparttable}
	}
\end{table}

\subsubsection{When $\sigma$ and $\rho$ are fixed}

To further evaluate the optimization function, we constructed contour plots of the $CC$ surface as a function of both the treatment effect and the omitted covariate effect, conditional on the true baseline Weibull parameters. The observed data were generated under true hazard ratios of 0.3, 0.5, and 0.8, with the true omitted covariate effect fixed at $\theta=2$ for the gamma frailty setting and $\theta=0.8$ for the log-normal frailty setting.

The TSB method was then applied over broad grid-search ranges chosen to contain the true values. The sample size per group was again set to 1,500, as in the B-14 data set. Figure 3 and Supplementary Figure S2 show the resulting contour plots under gamma and log-normal frailty settings, respectively. In each figure, the first column corresponds to $(\sigma,\rho)=(0.2,2.0)$ and the second column to $(\sigma,\rho)=(0.4,1.0)$. The red dashed lines indicate the true values of the treatment effect and omitted covariate effect, and the blue dotted lines indicate the corresponding recovered values obtained from the TSB method as marginal medians over the flat maximum surface.

For both frailty settings, the optimization function $CC$ exhibits a flat maximum region that includes the true parameter values, conditional on the baseline Weibull parameters. The recovered values are close to the true values, particularly in the gamma frailty setting considered here. As discussed in Appendix A, the sharpness versus flatness of this maximum region reflects how well the parameters are identified locally by the observed survival curves.

\begin{center}
\begin{figure}[!htbp]
\center
    \includegraphics[width = 0.8\linewidth,keepaspectratio]{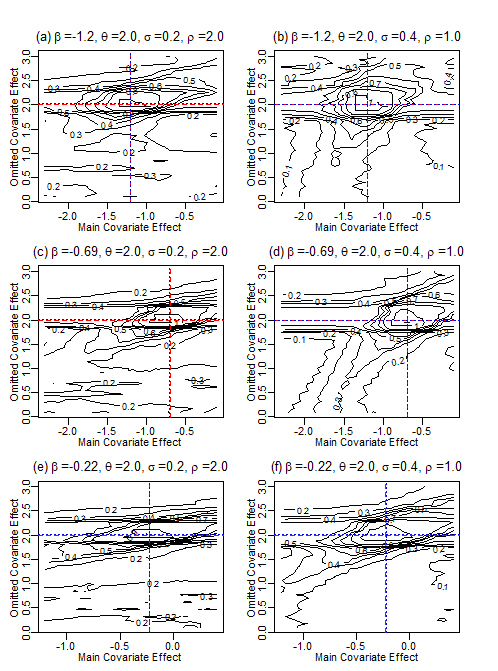}
    \vspace{0.0in}
    \caption{Contour plots of the optimization function ($CC$) under a gamma frailty setting. The true treatment effects correspond to hazard ratios of 0.3, 0.5, and 0.8, with omitted covariate effect fixed at $\theta=2$ and baseline Weibull parameters $(\sigma,\rho)=(0.2,2.0)$ or $(0.4,1.0)$. The red dashed lines indicate the true values of the treatment effect and omitted covariate effect, and the blue dotted lines indicate the corresponding recovered values obtained from the TSB method.}
    \label{fig:contour_gamma}
\end{figure}
\end{center}

\subsubsection{Generalization to Repeated Observed Data Sets When $\sigma$ and $\rho$ are Fixed}

Section 6.3.2 considered one observed data set at a time under specific scenarios. To evaluate the performance of the proposed method more broadly, we repeated the TSB procedure over 1,000 independently generated observed data sets under log-normal frailty settings with $\theta_{LN}=0.8$ (equivalently, $\theta_{GAM}\approx 2.0$), true hazard ratios of 0.3, 0.5, and 0.8, and sample sizes of 500, 1,000, and 1,500 per group.

Table \ref{table_4} summarizes the medians and interquartile ranges of the recovered treatment effect and omitted covariate effect. Overall, the proposed TSB method recovers both parameters reasonably well, with the true values generally lying within the interquartile ranges and performance improving as the sample size increases.

\begin{table}[!htbp]
	\centering
	\caption{Medians (interquartile ranges) of recovered treatment-effect and omitted-covariate-effect parameters from 1,000 observed data sets generated under log-normal frailty settings with $\theta_{LN}=0.8$, hazard ratios ($HR$) of 0.3, 0.5, and 0.8, and sample sizes of $n=500$, 1,000, and 1,500 per group}
	\label{table_4}
	\resizebox{1.0\textwidth}{!}{%
		\begin{threeparttable}
\begin{tabular}{l|c|ccccc}
\hline\hline
$n$ & Baseline parameters & $\beta$ & $HR$ & $\beta$, recovered & $\theta$, recovered \\ 
  \hline 
\multirow{6}{*}{500} & \multirow{3}{*}{$\sigma=0.2$,}  \multirow{3}{*}{$\rho=2.0$}
&  -1.204 & 0.3 & -1.204 (-1.204, -1.139) & 0.960 (0.870, 0.980) \\
& & -0.693 & 0.5 & -0.654 (-0.714, -0.616) & 0.860 (0.820, 0.890) \\
& & -0.223 & 0.8 & -0.198 (-0.248, -0.151) & 0.840 (0.820, 0.880) \\
  \cline{2-6}
 & \multirow{3}{*}{$\sigma=0.4$,}  \multirow{3}{*}{$\rho=1.0$}
 & -1.204 & 0.3 & -1.139 (-1.139, -1.079) & 0.780 (0.740, 0.833)\\
& & -0.693 & 0.5 & -0.693 (-0.734, -0.635) & 0.800 (0.760, 0.860)\\
& & -0.223 & 0.8 & -0.357 (-0.386, -0.223) & 0.820 (0.760, 0.880) \\  \hline 
\multirow{6}{*}{1,000} & \multirow{3}{*}{$\sigma=0.2$,}  \multirow{3}{*}{$\rho=2.0$}
&  -1.204 & 0.3 & -1.204 (-1.204, -1.139) & 0.880 (0.840, 0.900) \\
& & -0.693 & 0.5 & -0.693 (-0.734, -0.635) & 0.880 (0.840, 0.900) \\
& & -0.223 & 0.8 & -0.199 (-0.236, -0.163) & 0.860 (0.800, 0.900) \\
  \cline{2-6}
 & \multirow{3}{*}{$\sigma=0.4$,}  \multirow{3}{*}{$\rho=1.0$}
 & -1.204 & 0.3 & -1.139 (-1.204, -1.139) & 0.780 (0.750, 0.833)\\
& & -0.693 & 0.5 & -0.799 (-0.821, -0.693) & 0.840 (0.790, 0.890)\\
& & -0.223 & 0.8 & -0.223 (-0.248, -0.199) & 0.800 (0.780, 0.840)
\\ \hline
\multirow{6}{*}{1,500} & \multirow{3}{*}{$\sigma=0.2$,}  \multirow{3}{*}{$\rho=2.0$}
&  -1.204 & 0.3 & -1.172 (-1.204, -1.139) & 0.880 (0.820, 0.880) \\
& & -0.693 & 0.5 & -0.693 (-0.734, -0.654) & 0.880 (0.820, 0.900) \\
& & -0.223 & 0.8 & -0.236 (-0.274, -0.198) & 0.880 (0.860, 0.900) \\
  \cline{2-6}
 & \multirow{3}{*}{$\sigma=0.4$,}  \multirow{3}{*}{$\rho=1.0$}
 & -1.204 & 0.3 & -1.172 (-1.204, -1.139) & 0.830 (0.780, 0.850)\\
& & -0.693 & 0.5 & -0.693 (-0.734, -0.654) & 0.840 (0.800, 0.860)\\
& & -0.223 & 0.8 & -0.223 (-0.249, -0.186) & 0.825 (0.800, 0.840)
\\
      \hline \hline
\end{tabular}
		\end{threeparttable}
	}
\end{table}

\subsubsection{Conditional Optimization}

Our simulation results suggest that the ranges of parameter values, especially for $\beta$, that attain the flat maximum surface become wider as the number of parameters being recovered simultaneously increases. This is analogous to likelihood-based estimation, where the objective surface becomes flatter as the number of parameters increases. Accordingly, care is needed when the proposed optimization function is used with an exhaustive grid search over the full parameter set, including the baseline Weibull parameters $\sigma$ and $\rho$.

Therefore, when the flat maximum region is large, we propose a conditional and partially data-dependent approach in which the optimization function ($CC$) is maximized only over $\beta$ and $\theta$, conditional on baseline Weibull parameter estimates obtained from the observed data. For example, these estimates may be obtained from a parametric univariate gamma frailty model \cite{jeong2003}. This strategy introduces the additional assumption that the baseline Weibull parameter estimates from the observed data are close to their true values, at least asymptotically.

\section{Application to NSABP B-14 Data Set}

The proposed approach was applied to the NSABP B-14 data set. Because the true distribution of the frailty term is unknown, we considered both gamma and log-normal working models for the exponentiated omitted covariate. The sample sizes were 1,450 and 1,435 in the placebo and tamoxifen groups, respectively; accordingly, 1,500 subjects per group were used in the TSB simulations. Under the univariate gamma frailty analysis, the estimated inverse frailty variance was $\hat\kappa=0.292$, which corresponded to an estimated omitted covariate effect of $\hat\theta=3.587$ under the assumption that the omitted covariate had unit variance. Under the univariate log-normal frailty analysis, the estimated frailty variance was 0.173, which yielded $\hat\theta=0.374$ from equation (\ref{eqn;6.12}). The estimated treatment effect parameters (log-hazard ratios) from the univariate gamma and log-normal frailty analyses were $-0.738$ and $-0.307$, corresponding to hazard ratios of 0.48 and 0.74, respectively, whereas the naive proportional hazards model that ignored the omitted covariate gave an estimate of $-0.285$ (hazard ratio 0.75).

From Figure \ref{fig:b14_plots}(b), the censoring distribution appears to be similar across the two treatment groups and can be approximated by a two-component mixture of uniform distributions; specifically, $C \sim U(0,a_1)$ with probability $\omega_1$ and $C \sim U(a_1,a_2)$ with probability $1-\omega_1$, where $\omega_1=0.1$, $a_1=14$, and $a_2=21$. This censoring model was adopted in the TSB procedure.

Figure \ref{fig:b14}(a) and (c) show the TSB plots obtained using the parameter estimates from the univariate gamma and log-normal frailty analyses, respectively, overlaid on the observed treatment-specific survival curves from the B-14 data. The corresponding coverage coefficient ($CC$) values were 0.14 and 0.94, suggesting that the parameter estimates from the univariate log-normal frailty model were much closer to the target values than those from the univariate gamma frailty model.

\begin{figure}[htbp]
  \includegraphics[width = 1.0\linewidth,keepaspectratio]{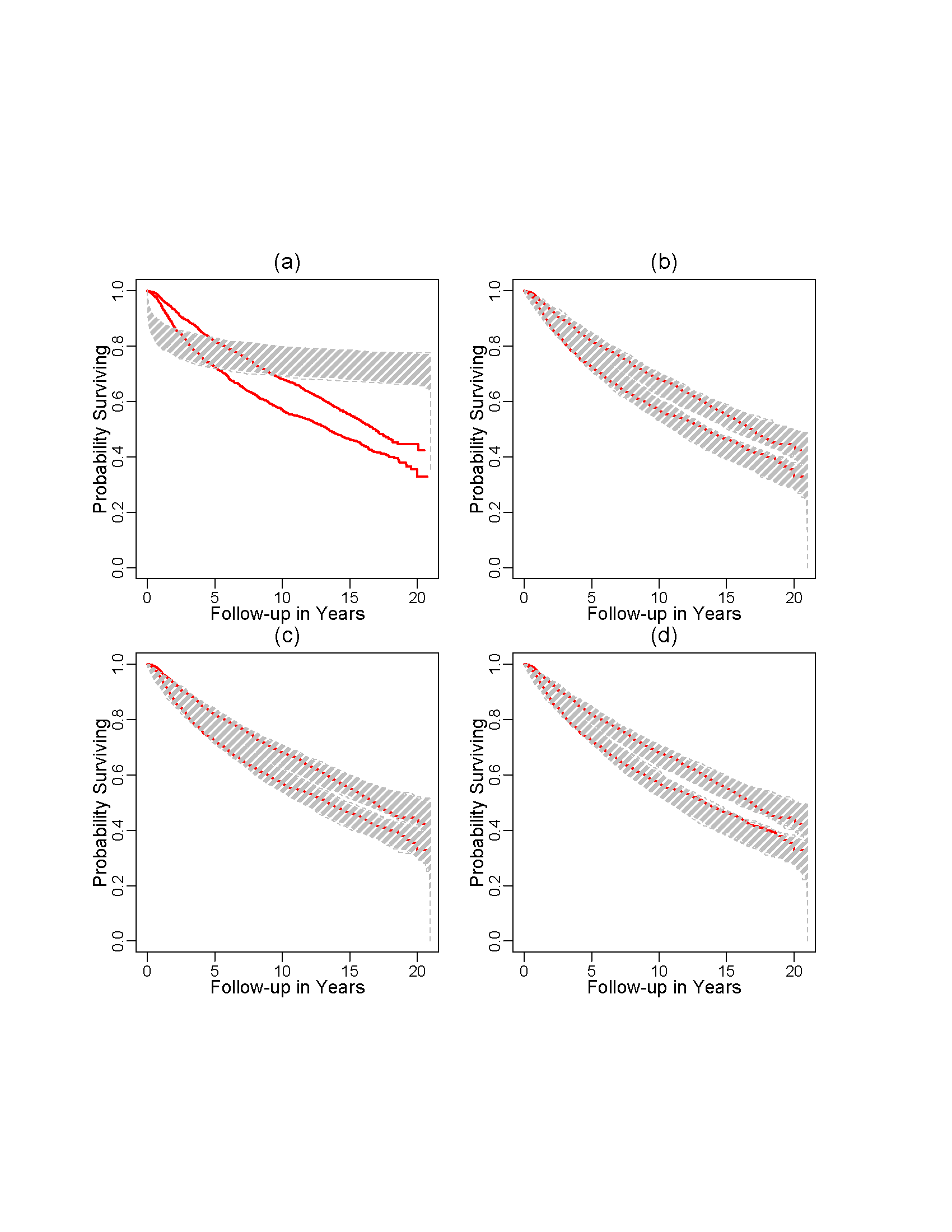}
    \vspace{-1.0in}
    \caption{Illustration of the TSB method using the NSABP B-14 data set: (a) parameter estimates from the univariate gamma frailty analysis, (b) recovered parameter values under the gamma frailty working model, (c) parameter estimates from the univariate log-normal frailty analysis, and (d) recovered parameter values under the log-normal frailty working model.}
    \label{fig:b14}
\end{figure}

Because all model parameters are unknown in the B-14 data, we performed an exhaustive grid search over all parameters simultaneously. Our earlier simulation results suggested that the estimates from the univariate log-normal frailty model provide reasonable starting values. Alternatively, an interactive graphical assessment can help guide the search: for example, a TSB that is too shallow in its early slope suggests $\theta$ should decrease, a band shifted too far to the left suggests $\sigma$ should decrease, and a first event occurring too early suggests $\rho$ should increase. Both approaches indicated that plausible target values were near $\tilde\beta=\log(0.7)$, $\tilde\theta=0.4$, $\tilde\sigma=0.06$, and $\tilde\rho=0.95$. Accordingly, the grid search intervals were set to
\[
\beta \in [\log(0.35),\log(0.9)] \text{ with step size } 0.01,
\]
\[
\theta \in [-0.6,-0.2]\cup[0.2,0.6] \text{ with step size } 0.02,
\]
\[
\sigma \in [0.04,0.08] \text{ with step size } 0.01,
\]
and
\[
\rho \in [0.9,1.0] \text{ with step size } 0.01,
\]
giving a total of 17,967 grid points. We denote this parameter search grid by $\mathcal{P}_n$, to distinguish it
from the time grid $\{g_i\}$ of Section 6.1 over which the survival curves
are evaluated; this notation is used in the consistency result of
Appendix A. The search interval for $\theta$ was chosen to allow for the possibility of a negative omitted covariate effect.

Under the gamma frailty working model, the maximum $CC$ value of 0.97 was attained at a single grid point, yielding recovered parameter values of $\exp(\bar{\hat\beta})=0.68$, that is, $\bar{\hat\beta}=-0.39$, $\bar{\hat\theta}=0.34$, $\bar{\hat\sigma}=0.06$, and $\bar{\hat\rho}=0.98$. Under the log-normal frailty working model, the maximum $CC$ value of 0.97 was attained at five grid points, and the medians of the recovered parameter values over these maximizing grid points were $\exp(\bar{\hat\beta})=0.69$ (plateau interval $=[0.68,0.70]$), that is, $\bar{\hat\beta}=-0.37$, $\bar{\hat\theta}=0.46$ (plateau interval $=[0.42,0.48]$), $\bar{\hat\sigma}=0.06$ (unchanged across all five grid points), and $\bar{\hat\rho}=0.98$ (unchanged across all five grid points). Thus, the recovered treatment effects were very similar under the two frailty assumptions, consistent with the simulation results in Section 6.3. As expected, the main difference between the recovered parameter values under the two frailty assumptions lay in the omitted covariate effect. These values imply frailty variance estimates of 0.109 under the gamma frailty model, obtained by solving $\psi^{\prime}(\kappa)=\bar{\hat\theta}^2$ and taking $1/\kappa$, and 0.291 under the log-normal frailty model, obtained from $\exp(\bar{\hat\theta}^2)\{\exp(\bar{\hat\theta}^2)-1\}$.

Figure \ref{fig:b14}(b) and (d) show the TSB plots obtained using the recovered parameter values under the gamma and log-normal frailty working models, respectively. In both cases, the resulting bands achieved a coverage coefficient of 0.97, confirming that the simulated survival bands closely contained the observed treatment-specific survival curves. For comparison, the analytic formula in equation (\ref{eqn;4.0}) produced a recovered treatment effect of $-1.58$ (hazard ratio 0.21), which is far from the values obtained by the TSB procedure.

\section{Discussion}

As our simulation results showed, accounting for the omitted covariate through univariate frailty analysis can reduce bias reasonably well \emph{on average} when the omitted covariate effect is not large, but its performance deteriorates when that effect is substantial. We therefore proposed a novel simulation-based approach to further reduce bias in the hazard ratio arising from omitted covariates that may induce a vanishing hazard ratio in the observed data. The main idea is to identify the target population parameters that would generate a tightest survival band (TSB) containing the observed treatment-specific survival curves as fully as possible. We provided a formal justification for the empirical behavior of this procedure in Appendix A. Our simulation studies indicate that, under practical censoring scenarios and Weibull baseline event-time distributions, the proposed method performs reasonably well in moderate to large samples. We also illustrated the method using data from a large breast cancer clinical trial.

The proposed approach relies on a few assumptions: (i) unit variance of the omitted covariate, (ii) a random uniform censoring mechanism or a mixture of such mechanisms, (iii) a parametric working model for the frailty term, and (iv) a two-parameter Weibull baseline distribution. These assumptions are practically motivated. First, the omitted covariate may reasonably be regarded as standardized to have unit variance. Second, mixtures of uniform censoring distributions are commonly encountered in randomized clinical trials because of staggered entry and administrative study termination. Third, our simulation results suggest that the TSB approach can recover the target hazard ratio reasonably well even when the assumed frailty distribution is misspecified. Fourth, the two-parameter Weibull distribution provides a flexible yet parsimonious model for the baseline event-time distribution.

Although an exhaustive grid search can be computationally intensive, the burden can be substantially reduced through parallel computing on a high-performance computing system, making the procedure feasible in practice.

\section*{Acknowledgments}
This work utilized the computational resources of the NIH HPC Biowulf cluster (https://hpc.nih.gov).

\appendix
\renewcommand{\thesection}{Appendix A}
\section{Consistency of the TSB Estimator}
\label{app:consistency}


Let $\boldsymbol{\eta} = (\beta, \theta, \sigma, \rho)$ denote the vector of
candidate parameters, and let
$\boldsymbol{\eta}_0 = (\beta_0, \theta_0, \sigma_0, \rho_0)$ denote the true
population values under Model III with a Weibull baseline hazard and
frailty distribution $F_W$ satisfying $\mathrm{Var}(Z) = 1$.

The sample size $n$ is held fixed throughout at the value determined by
the observed data, matching the TSB procedure exactly as used in practice
(Sections 6.1--7): the target parameter $\boldsymbol{\eta}_0$ is
recovered by comparing the observed Kaplan--Meier curve, evaluated at
this fixed $n$, against simulated envelopes generated at this same $n$.
The only limit taken below is in the number of simulation replicates,
$N \to \infty$, which is a design choice available to the analyst (e.g.\
$N=1000$ in Sections 6.2 and 7) and can be made as large as
computationally convenient.

For $x = 0,1$ and each time-grid point $g_i$ $(i=1,\dots,n_g)$, let
$F_{n,x,g_i}^{\boldsymbol{\eta}}$ denote the sampling distribution, at
sample size $n$, of the Kaplan--Meier estimator $\hat S_x(g_i)$ for group
$x$ evaluated at $g_i$, when the data are generated under candidate
parameter $\boldsymbol{\eta}$ (Weibull baseline with parameters
$\sigma,\rho$, treatment effect $\beta$, frailty effect $\theta$, and the
censoring mechanism of Assumption \ref{assump:censoring}). Let
$\ell_n(x,g_i;\boldsymbol{\eta})$ and $u_n(x,g_i;\boldsymbol{\eta})$ denote
the essential infimum and supremum of $F_{n,x,g_i}^{\boldsymbol{\eta}}$,
and write
\[
R_n(x,g_i;\boldsymbol{\eta}) = [\ell_n(x,g_i;\boldsymbol{\eta}),\,
u_n(x,g_i;\boldsymbol{\eta})]
\]
for its essential range. Let $S_x(g_i;\boldsymbol{\eta})$ denote the
population survival value implied by candidate $\boldsymbol{\eta}$ under
the correctly specified Model III survival function (equation
\eqref{eqn;2}), so that $F_{n,x,g_i}^{\boldsymbol{\eta}}$ is centered at,
or asymptotically unbiased for, $S_x(g_i;\boldsymbol{\eta})$; write
$S_x^{(0)}(g_i) \equiv S_x(g_i;\boldsymbol{\eta}_0)$ for the true value.

Let $\mathcal{P}_n$ denote the candidate \emph{parameter} search grid over
$\boldsymbol{\eta}$ described in Section 6.1 and used in the exhaustive
grid search of Section 6.4 -- to be distinguished from the \emph{time}
grid $\{g_i\}_{i=1}^{n_g}$ of Section 6.1, over which survival curves are
evaluated.


\begin{assumption}[Fixed-$n$ range identifiability]
\label{assump:identif}
For every parameter-grid point $\boldsymbol{\eta} \in \mathcal{P}_n$ with
$\boldsymbol{\eta} \neq \boldsymbol{\eta}_0$, there exists at least one
time-grid point $g_i$ and treatment group $x$ such that
\[
R_n(x,g_i;\boldsymbol{\eta}_0) \not\subseteq R_n(x,g_i;\boldsymbol{\eta}),
\]
with $F_{n,x,g_i}^{\boldsymbol{\eta}_0}$ placing positive probability mass
on the part of $R_n(x,g_i;\boldsymbol{\eta}_0)$ lying outside
$R_n(x,g_i;\boldsymbol{\eta})$. The only excluded configurations are
degenerate ones in which treatment has no effect on survival at any
observed time regardless of the omitted-covariate parameter (e.g.\
$\beta = 0$ together with $\theta = 0$), which are not of practical
interest since they correspond to no treatment effect to recover.
\end{assumption}

\noindent
This is a statement about the finite-$n$ \emph{distributions}
$F_{n,x,g_i}^{\boldsymbol{\eta}}$, not merely about their centers
$S_x(g_i;\boldsymbol{\eta})$: it requires that the range of curves
plausibly generated under a wrong $\boldsymbol{\eta}$, at the actual
sample size $n$, misses some non-negligible part of the range of curves
plausibly generated under the truth. This condition can fail -- more
often as $n$ decreases -- when two candidates imply sampling
distributions with heavily overlapping support even though their centers
$S_x(g_i;\boldsymbol{\eta})$ differ; this is the mechanism behind the
flat-maximum-region phenomenon documented empirically in Section 6.3.

The essential range $R_n(x,g_i;\boldsymbol{\eta})$ is the natural object
to use here because it is precisely the quantity that $SPSR_x(g_i)$
targets by construction: $LO_x(g_i)$ and $UP_x(g_i)$ are the sample
min/max of $N$ simulated Kaplan--Meier draws, and sample extrema converge,
as $N \to \infty$, to the essential inf/sup of the generating distribution
rather than to any other feature of it (e.g.\ a quantile or a moment).
Framing Assumption \ref{assump:identif} in terms of $R_n$ therefore
matches the object the TSB procedure actually estimates, rather than
introducing a separate theoretical construct.

\begin{assumption}[Correctly specified censoring]
\label{assump:censoring}
The assumed censoring mechanism (uniform, or a mixture of uniform
distributions, as in Section 6) matches the true censoring distribution,
or is at least non-informative and independent of the treatment
indicator $X$, implying that a grid point arbitrarily close to the truth can always be found
once $n$ is large enough.
\end{assumption}


\begin{proposition}[Fixed-$n$ consistency of the TSB criterion]
\label{prop:fixedn}
Fix the sample size $n$ at the value determined by the observed data.
Under Assumptions \ref{assump:identif}--\ref{assump:censoring}, as the
number of simulation replicates $N \to \infty$,
\[
CC(\boldsymbol{\eta}_0) \;\longrightarrow\; 1 \quad \text{a.s.},
\]
and for every $\boldsymbol{\eta} \in \mathcal{P}_n$ satisfying Assumption
\ref{assump:identif},
\[
CC(\boldsymbol{\eta}) \;\longrightarrow\; c_n(\boldsymbol{\eta}) < 1
\quad \text{a.s.}
\]
Consequently the maximizing set
\[
M_n \;=\; \arg\max_{\boldsymbol{\eta} \in \mathcal{P}_n} CC(\boldsymbol{\eta})
\;=\;
\left\{ \boldsymbol{\eta} \in \mathcal{P}_n :
CC(\boldsymbol{\eta}) = \max_{\boldsymbol{\eta}' \in \mathcal{P}_n} CC(\boldsymbol{\eta}') \right\}
\]
contains $\boldsymbol{\eta}_0$ with probability tending to 1 as $N \to
\infty$, at the fixed sample size $n$. $M_n$ need not be a singleton at
finite $n$: any additional grid points it contains must, by Assumption
\ref{assump:identif}, be indistinguishable in range from
$\boldsymbol{\eta}_0$ at that sample size and grid resolution -- precisely
the flat-maximum-region phenomenon of Section 6.3.
\end{proposition}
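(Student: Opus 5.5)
The plan is to condition on the observed data set throughout. The only source of randomness that is sent to a limit is then the $N$ simulated replicates generated under each candidate $\boldsymbol{\eta}$. Fix $\boldsymbol{\eta}$, a group $x$ and a time-grid point $g_i$. The simulated values $\hat S^{(sim)}_{x,j}(g_i)$, $j=1,\dots,N$, are i.i.d.\ draws from $F_{n,x,g_i}^{\boldsymbol{\eta}}$. The proof then proceeds in three steps: convergence of the simulated envelope, evaluation of the limiting coverage coefficient at $\boldsymbol{\eta}_0$ and at other grid points, and a finite-grid argument for the maximizing set.

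\emph{Step 1: the simulated envelope converges to the essential range.} First I would observe that at fixed $n$ the Kaplan--Meier estimator at $g_i$ is a product of factors $(1-d/r)$ with integers $0\le d\le r\le n$. It therefore takes only finitely many values, so $F_{n,x,g_i}^{\boldsymbol{\eta}}$ is a discrete law with finite support. Its essential infimum $\ell_n$ and supremum $u_n$ are atoms of positive mass. By the Borel--Cantelli lemma, almost surely there is a finite $N_0$ such that for all $N\ge N_0$
\[
LO_x(g_i)=\ell_n(x,g_i;\boldsymbol{\eta}),\qquad UP_x(g_i)=u_n(x,g_i;\boldsymbol{\eta}).
\]
This is stronger than convergence of sample extrema. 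Because $x$ takes two values, there are $n_g$ time-grid points and $\mathcal{P}_n$ is finite, the same holds simultaneously for all triples $(x,g_i,\boldsymbol{\eta})$. Consequently each indicator in $f_x$ is eventually constant, and
\[
CC(\boldsymbol{\eta})\to c_n(\boldsymbol{\eta})=\frac{1}{2n_g}\sum_{x=0}^1\sum_{i=1}^{n_g} I\bigl(\hat S^{(obs)}_x(g_i)\in R_n(x,g_i;\boldsymbol{\eta})\bigr)\quad\text{a.s.}
\]
If one prefers not to use discreteness, the general fact that sample minima and maxima converge a.s.\ to the essential infimum and supremum gives the same limit. The only extra care needed is at endpoints, which is why the containment criterion uses a closed interval.

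\emph{Step 2: the limit at $\boldsymbol{\eta}_0$ and at other grid points.} Under Model III with the censoring of Assumption \ref{assump:censoring}, the observed curve $\hat S^{(obs)}_x(g_i)$ is itself a draw from $F_{n,x,g_i}^{\boldsymbol{\eta}_0}$. Hence it lies in the support, and so in $R_n(x,g_i;\boldsymbol{\eta}_0)$, with probability one, simultaneously for all finitely many $(x,g_i)$. This gives $c_n(\boldsymbol{\eta}_0)=1$ almost surely. For $\boldsymbol{\eta}\neq\boldsymbol{\eta}_0$, Assumption \ref{assump:identif} supplies a pair $(x,g_i)$ and a set $A\subseteq R_n(x,g_i;\boldsymbol{\eta}_0)\setminus R_n(x,g_i;\boldsymbol{\eta})$ with $F_{n,x,g_i}^{\boldsymbol{\eta}_0}(A)>0$. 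On the event $\{\hat S^{(obs)}_x(g_i)\in A\}$ the corresponding indicator is zero, so $c_n(\boldsymbol{\eta})\le 1-1/(2n_g)<1$.

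\emph{Step 3: the maximizing set.} Since $CC\le 1$ everywhere and $\mathcal{P}_n$ is finite, Step 1 yields an almost surely finite $N_0$ beyond which $CC(\boldsymbol{\eta}_0)=1=\max_{\mathcal{P}_n}CC$. Thus $\boldsymbol{\eta}_0\in M_n$ eventually almost surely, which is stronger than membership with probability tending to one. Any other $\boldsymbol{\eta}\in M_n$ must have $\hat S^{(obs)}$ inside $R_n(\cdot;\boldsymbol{\eta})$ at every time-grid point. That is exactly range-indistinguishability from $\boldsymbol{\eta}_0$ for this data set, i.e.\ the flat-maximum phenomenon.

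\emph{Main obstacle.} I expect the hard point to be the strict inequality $c_n(\boldsymbol{\eta})<1$ in Step 2. For fixed observed data, $c_n(\boldsymbol{\eta})$ is deterministic. Assumption \ref{assump:identif} only guarantees that the observed curve falls outside $R_n(\cdot;\boldsymbol{\eta})$ with \emph{positive} probability, not with probability one. I would therefore state that claim conditionally: either $c_n(\boldsymbol{\eta})<1$ on an event of $F^{\boldsymbol{\eta}_0}$-probability at least $F_{n,x,g_i}^{\boldsymbol{\eta}_0}(A)>0$, or $\boldsymbol{\eta}$ is one of the range-indistinguishable points that the proposition's final sentence explicitly allows in $M_n$. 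In neither case is the conclusion $\boldsymbol{\eta}_0\in M_n$ affected, since it relies only on $c_n(\boldsymbol{\eta}_0)=1$.
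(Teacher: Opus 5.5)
Your argument has the same skeleton as the paper's proof: simulated envelope $\to$ essential range, containment at $\boldsymbol{\eta}_0$, loss of containment elsewhere, then a finite-grid conclusion. It is tighter than the paper's at two points.

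First, the paper only invokes a.s.\ convergence of sample extrema. You instead use that the Kaplan--Meier value at fixed $n$ lies in a finite set, so $LO_x(g_i)$ and $UP_x(g_i)$ eventually \emph{equal} $\ell_n$ and $u_n$. Since $CC$ also takes values in $\{k/(2n_g)\}$, this gives $CC(\boldsymbol{\eta}_0)=1$ eventually a.s. Hence $\boldsymbol{\eta}_0\in M_n$ eventually a.s., and you do not need the paper's Step~4 comparison against competitors. It also settles the endpoint question the paper's footnote leaves implicit. In your non-discrete aside, though, the closed interval is not what rescues endpoints: if $u_n$ is not an atom then $UP_x(g_i)<u_n$ for every $N$, so an observed value equal to $u_n$ is never covered. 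What saves you at $\boldsymbol{\eta}_0$ is that this event has probability zero.

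Second, your ``main obstacle'' is a genuine defect of the paper's Step~3, not of your proof. The paper passes from $\Pr[\hat S^{(obs)}_x(g_i)\notin SPSR_x(g_i)]\to p_{g_i,x}>0$ to $CC(\boldsymbol{\eta})\to c_n(\boldsymbol{\eta})<1$ a.s., and that does not follow. On any data set where the observed curve happens to lie in $R_n(\cdot;\boldsymbol{\eta})$ at every $(x,g_i)$, one gets $c_n(\boldsymbol{\eta})=1$. Assumption~\ref{assump:identif} bounds that event only by $1-p_{g_i,x}$, not by zero. So neither the a.s.\ strict inequality nor the literal claim that extra maximizers must violate Assumption~\ref{assump:identif} can be derived. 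Your positive-probability version, and your data-dependent reading of the last sentence, are the correct ones. The headline conclusion $\boldsymbol{\eta}_0\in M_n$ is unaffected.

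Your finite-support observation actually cuts deeper than you use it. In the paper's generating model, event times have positive density on $(0,\infty)$, and the censoring and allocation laws do not depend on $\boldsymbol{\eta}$. So the data laws under any two candidates are mutually absolutely continuous. It follows that $F^{\boldsymbol{\eta}}_{n,x,g_i}$ has the same support for every $\boldsymbol{\eta}$; at interior time-grid points $R_n(x,g_i;\boldsymbol{\eta})=[0,1]$. Hence Assumption~\ref{assump:identif} holds at no grid point, and $CC\to 1$ on all of $\mathcal{P}_n$ as $N\to\infty$. The proposition is therefore true but vacuous as an identification result.

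Both your proof and the paper's also tacitly require $\boldsymbol{\eta}_0\in\mathcal{P}_n$. They further need the exact-match half of Assumption~\ref{assump:censoring}, not the weaker ``non-informative'' alternative.
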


\begin{proof}[Proof]

\textbf{Step 1: The simulated envelope resolves the true finite-$n$
sampling range.}
Fix $\boldsymbol{\eta} \in \mathcal{P}_n$, a time-grid point $g_i$, and a
group $x$. The simulated Kaplan--Meier values
$\hat S^{(sim)}_{x,j}(g_i)$, $j=1,\dots,N$, are i.i.d.\ draws from
$F_{n,x,g_i}^{\boldsymbol{\eta}}$. By almost-sure convergence of sample
extrema to the essential inf/sup of the generating distribution,\footnote{This
is a standard fact: writing $u = u_n(x,g_i;\boldsymbol{\eta})$ for the
essential supremum, for any $\epsilon>0$ we have
$\Pr[UP_x(g_i) < u-\epsilon] = F_{n,x,g_i}^{\boldsymbol{\eta}}(u-\epsilon)^N
\to 0$ as $N \to \infty$, since
$F_{n,x,g_i}^{\boldsymbol{\eta}}(u-\epsilon) < 1$ by definition of the
essential supremum; since $UP_x(g_i)$ is non-decreasing in $N$ and bounded
above by $u$, this forces $UP_x(g_i) \to u$ a.s. The argument for
$LO_x(g_i) \to \ell_n(x,g_i;\boldsymbol{\eta})$ is symmetric. See, e.g.,
Galambos (1987, \emph{The Asymptotic Theory of Extreme Order Statistics},
2nd ed., Ch.\ 2) or Leadbetter et al.\ (1983, \emph{Extremes and Related
Properties of Random Sequences and Processes}, Springer, Sec.\ 1.2) for
this and related results on convergence of sample extrema.}
\[
LO_x(g_i) \to \ell_n(x,g_i;\boldsymbol{\eta}), \qquad
UP_x(g_i) \to u_n(x,g_i;\boldsymbol{\eta}) \qquad \text{a.s.\ as } N \to \infty,
\]
so $SPSR_x(g_i) \to R_n(x,g_i;\boldsymbol{\eta})$ a.s. This occurs at the
fixed sample size $n$; increasing $N$ only reveals the true support of a
fixed, finite-$n$ sampling distribution, and does not require $n$ to grow.

\textbf{Step 2: Containment at the truth.}
Set $\boldsymbol{\eta} = \boldsymbol{\eta}_0$. The observed curve
$\hat S_x^{(obs)}(g_i)$ is itself a single draw from
$F_{n,x,g_i}^{\boldsymbol{\eta}_0}$ -- the same distribution generating the
simulated replicates in Step 1 at $\boldsymbol{\eta}_0$ -- so
$\hat S_x^{(obs)}(g_i) \in R_n(x,g_i;\boldsymbol{\eta}_0)$ with probability
1. Combining with Step 1, the containment indicator at $(g_i,x)$
converges to 1 a.s.\ as $N \to \infty$, for every grid point and group.
Hence $f_x \to n_g$ for each $x$, and
$CC(\boldsymbol{\eta}_0) \to 1$ a.s. This uses only the fixed sample size
$n$ of the observed data; no limit in $n$ is required.

\textbf{Step 3: Reduced containment away from the truth.}
Consider $\boldsymbol{\eta} \neq \boldsymbol{\eta}_0$ satisfying
Assumption \ref{assump:identif}, so at some $(g_i,x)$ the true range
$R_n(x,g_i;\boldsymbol{\eta}_0)$ is not contained in
$R_n(x,g_i;\boldsymbol{\eta})$, with $F_{n,x,g_i}^{\boldsymbol{\eta}_0}$
placing positive mass $p_{g_i,x}>0$ on the excluded region. By Step 1,
$SPSR_x(g_i) \to R_n(x,g_i;\boldsymbol{\eta})$ a.s. Since
$\hat S_x^{(obs)}(g_i) \sim F_{n,x,g_i}^{\boldsymbol{\eta}_0}$,
\[
\Pr\!\left[\hat S_x^{(obs)}(g_i) \notin SPSR_x(g_i)\right]
\longrightarrow p_{g_i,x} > 0 \quad \text{as } N \to \infty,
\]
so the containment indicator at $(g_i,x)$ does not converge to 1.
Aggregating over the finite grid and both groups, and noting
$\mathcal{P}_n$ is finite at fixed $n$, gives
$CC(\boldsymbol{\eta}) \to c_n(\boldsymbol{\eta}) < 1$ a.s., for every
identifiable $\boldsymbol{\eta} \in \mathcal{P}_n \setminus \{\boldsymbol{\eta}_0\}$.

\textbf{Step 4: Conclusion.}
Steps 2--3 show that, at the fixed sample size $n$, $CC(\boldsymbol{\eta}_0)
\to 1$ while $CC(\boldsymbol{\eta}) \to c_n(\boldsymbol{\eta}) < 1$ for
every identifiable competitor, as $N \to \infty$. Hence, for $N$
sufficiently large, $\boldsymbol{\eta}_0$ attains a strictly higher
coverage coefficient than any identifiable competitor on the grid, and
$\Pr[\boldsymbol{\eta}_0 \in M_n] \to 1$ as $N \to \infty$. Grid points
violating Assumption \ref{assump:identif} may remain in $M_n$ alongside
$\boldsymbol{\eta}_0$; this is the flat-maximum-region phenomenon of
Section 6.3, and is a genuine feature of the finite-$n$ identification
problem rather than an artifact of the proof.
\end{proof}

\bibliographystyle{elsarticle-num}

\end{document}